\documentclass[aps,pra,reprint,superscriptaddress,amsmath,amssymb,floatfix]{revtex4-2}

\usepackage{graphicx}
\usepackage{bm}
\usepackage{amsthm}
\newtheorem{theorem}{Theorem}
\newtheorem{lemma}[theorem]{Lemma}
\usepackage[colorlinks=true,allcolors=blue]{hyperref}

\newcommand{\Ic}{I_{c}}
\newcommand{\id}{\mathrm{id}}
\newcommand{\eig}{\mathrm{eig}}

\begin{document}

\title{Dephasing channels have no entanglement-breaking threshold:\\
a caution for threshold claims in quantum biology}

\author{Hikaru Wakaura}
\email{hikaruwakaura@gmail.com}
\affiliation{QIRI (Quantum Integrated Research Institute Inc.),
1--16--3 Akasaka, Minato-ku, Tokyo 107--0061, Japan}

\author{Taiki Tanimae}
\affiliation{QIRI (Quantum Integrated Research Institute Inc.),
1--16--3 Akasaka, Minato-ku, Tokyo 107--0061, Japan}

\date{\today}

\begin{abstract}
A recurring argument in quantum biology is that decoherence above some critical
rate renders a biological channel entanglement-breaking, and therefore useless
as a quantum resource. We show that for the two channel families in which this
argument is usually posed no such critical rate exists. For uniform dephasing at
rate $\gamma$ on a $d$-dimensional system, the partial transpose of the Choi
state has eigenvalue $-e^{-\gamma}/d$ on every antisymmetric vector, so the
channel is non-PPT---hence not entanglement-breaking---at every finite
$\gamma$. Adding amplitude damping does not change this: that channel is a
tensor power of a qubit map whose partial transpose has minimum eigenvalue
$-c^{2}/(p+\sqrt{p^{2}+4c^{2}})<0$ for all finite rates. What does possess a
finite threshold is the single-letter coherent information, which vanishes near
$\gamma\simeq0.62$. We further identify the numerical mechanism that
manufactures the missing threshold: because the minimum partial-transpose
eigenvalue approaches zero exponentially without reaching it, a PPT test with a
fixed absolute tolerance $\epsilon$ reports a threshold at whatever $\gamma$
satisfies $\min\eig\,C^{T_{A}}=-\epsilon$; $\epsilon=10^{-10}$ yields
$\gamma=5.49$ and $\epsilon=10^{-12}$ yields $6.58$. We encountered this
artifact in our own now-retracted work, and give the closed forms and code that
avoid it.
\end{abstract}

\maketitle

\section{Introduction}

Whether coherent spin dynamics can survive in warm, wet biological matter is an
old question~\cite{Tegmark2000} that has become quantitative for the radical-pair
mechanism of magnetoreception~\cite{RitzAdemSchulten2000,HoreMouritsen2016,
Kominis2015} and for proposals of nuclear-spin processing in
neurons~\cite{Fisher2015}. A standard way to pose it is to ask how much
entanglement, or how much quantum information, survives a given decoherence
rate~\cite{CaiPlenio2010,GaugerPlenio2011}.

A particular version of this question has proved seductive: one models the
biological environment as a decoherence channel $\mathcal{N}_{\gamma}$, asserts
that above some critical rate $\gamma_{c}$ the channel becomes
\emph{entanglement-breaking} (EB), and concludes that any quantum advantage is
extinguished beyond $\gamma_{c}$---so that the biologically interesting question
is whether the system sits above or below it. The appeal is obvious: EB is a
sharp, binary, physically meaningful property~\cite{HorodeckiShorRuskai2003}, and
a threshold invites a clean verdict.

We show here that for the channel families in which this argument is actually
posed, the threshold does not exist. Both are non-PPT---and therefore not
entanglement-breaking---at \emph{every} finite decoherence rate; they become EB
only in the limit $\gamma\to\infty$. Any reported finite $\gamma_{c}$ for these
channels is therefore either a different quantity under an incorrect name, or a
numerical artifact.

Three inequivalent properties are routinely conflated in this literature, and
separating them is most of the content of this paper:
\begin{enumerate}\itemsep2pt
\item[(i)] $\mathcal{N}$ is entanglement-breaking, i.e.\ its Choi state is
      separable;
\item[(ii)] the single-letter coherent information vanishes,
      $\max_{\rho}\Ic(\rho,\mathcal{N})=0$;
\item[(iii)] the quantum capacity vanishes, $Q(\mathcal{N})=0$.
\end{enumerate}
These satisfy (i)\,$\Rightarrow$\,(iii)\,$\Rightarrow$\,(ii) but no converse.
Property (ii) does not imply (iii), because the coherent information is
superadditive: there are channels with $\max_{\rho}\Ic=0$ and $Q>0$
\cite{DiVincenzo1998,SmithYard2008}, and no finite number of channel uses
suffices in general to certify $Q=0$~\cite{Cubitt2015}. Property (iii) does not
imply (i), the PPT entangled channels being the standard
counterexample~\cite{HorodeckiBound1998}. For the channels considered here only
(ii) has a finite threshold, and it is (ii) that a numerical search of the usual
kind will find.

Finally we identify the specific numerical trap that converts ``no threshold''
into a confident number. It is not exotic: it is a fixed absolute tolerance in a
PPT test applied to a quantity that decays exponentially to zero. We fell into
it ourselves. Three preprints of ours built on a threshold obtained this way and
have been retracted~\cite{P1retraction,P2retraction,P3retraction}; the present
paper is the general statement of what went wrong, addressed to the channel
theory rather than to the biology.

\section{Channels and criteria}

Write $\mathcal{N}$ for a quantum channel on a $d$-dimensional system and
\begin{equation}
C_{\mathcal{N}}=(\id\otimes\mathcal{N})\bigl(|\Phi^{+}\rangle\langle\Phi^{+}|\bigr),
\qquad
|\Phi^{+}\rangle=\frac{1}{\sqrt d}\sum_{i=0}^{d-1}|ii\rangle ,
\end{equation}
for its normalised Choi state. The Choi--Jamio{\l}kowski isomorphism makes the
following exact~\cite{HorodeckiShorRuskai2003}: $\mathcal{N}$ is
entanglement-breaking if and only if $C_{\mathcal{N}}$ is separable. We test
this through the Peres--Horodecki criterion~\cite{Peres1996,HorodeckiPLA1996}: if
the partial transpose $C^{T_{A}}$ has a negative eigenvalue then
$C_{\mathcal{N}}$ is entangled and $\mathcal{N}$ is \emph{not} EB. Negativity of
the partial transpose is a certificate of non-EB; positivity alone is not a
certificate of EB when $d^{2}>6$, but we will not need that direction.

The coherent information of a state $\rho$ through $\mathcal{N}$ is
\begin{equation}
\Ic(\rho,\mathcal{N})
 = S\bigl(\mathcal{N}(\rho)\bigr)
 - S\bigl((\id\otimes\mathcal{N})(\psi_{\rho})\bigr),
\label{eq:ic}
\end{equation}
with $\psi_{\rho}$ any purification of $\rho$ and $S$ the von Neumann
entropy~\cite{SchumacherNielsen1996}. By the LSD
theorem~\cite{Lloyd1997,Devetak2005,DevetakShor2005} the quantum capacity is the
regularisation $Q=\lim_{n}\frac1n\max_{\rho}\Ic(\rho,\mathcal{N}^{\otimes n})$,
of which $\max_{\rho}\Ic(\rho,\mathcal{N})$ is only a lower bound.

We consider two families.

\paragraph*{(A) Uniform dephasing.} Every off-diagonal matrix element is
multiplied by $\lambda=e^{-\gamma}$,
\begin{equation}
\mathcal{D}_{\gamma}=\lambda\,\id+(1-\lambda)\,\Delta ,
\qquad \lambda=e^{-\gamma},
\label{eq:dephasing}
\end{equation}
where $\Delta(\rho)=\sum_{i}|i\rangle\langle i|\rho|i\rangle\langle i|$ is
complete dephasing in the computational basis. This is the channel written down
in the models we are addressing.

\paragraph*{(B) Dephasing with amplitude damping.} The generator that such
models actually integrate adds relaxation: each of $n$ qubits evolves under
Lindblad operators $\sqrt{\gamma}\,\sigma_{z}$ and
$\sqrt{\kappa\gamma}\,\sigma_{-}$ for a time $t$, with $\kappa$ the ratio of
relaxation to dephasing (we take $\kappa=0.1$, $t=1$, $n=2$ unless stated).
Integrating gives the qubit map
\begin{equation}
\rho_{11}\mapsto(1-p)\rho_{11},\qquad
\rho_{01}\mapsto c\,\rho_{01},
\label{eq:qubitmap}
\end{equation}
with damping probability and coherence factor
\begin{equation}
p=1-e^{-\kappa\gamma t},
\qquad
c=e^{-2\gamma t}\sqrt{1-p},
\qquad
c^{2}=e^{-(4+\kappa)\gamma t}.
\label{eq:pc}
\end{equation}

\section{Uniform dephasing is never entanglement-breaking}

\begin{theorem}\label{thm:dephasing}
For every finite $\gamma$ and every $d\ge2$, the Choi state of
$\mathcal{D}_{\gamma}$ satisfies
\begin{equation}
\min\eig\,C^{T_{A}}_{\mathcal{D}_{\gamma}}=-\frac{e^{-\gamma}}{d}<0 .
\label{eq:thm1}
\end{equation}
Hence $\mathcal{D}_{\gamma}$ is non-PPT and not entanglement-breaking at any
finite rate. It is EB only in the limit $\gamma\to\infty$.
\end{theorem}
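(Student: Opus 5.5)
The plan is to compute the Choi state of $\mathcal{D}_{\gamma}$ explicitly in the product basis $\{|ij\rangle\}$, take its partial transpose, and diagonalise the result by decomposing $\mathbb{C}^{d}\otimes\mathbb{C}^{d}$ into the span of the diagonal vectors $|ii\rangle$ and the two-dimensional blocks $\mathrm{span}\{|ij\rangle,|ji\rangle\}$ for $i<j$.

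First, by linearity and Eq.~\eqref{eq:dephasing},
\begin{equation}
C_{\mathcal{D}_{\gamma}}=\frac1d\sum_{i,j}\bigl(\lambda+(1-\lambda)\delta_{ij}\bigr)\,|ii\rangle\langle jj| .
\end{equation}
The coefficient of $|ii\rangle\langle ii|$ is $1/d$, and for $i\neq j$ the coefficient of $|ii\rangle\langle jj|$ is $\lambda/d$.

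Next, the partial transpose on $A$ sends $|i\rangle\langle j|\otimes|i\rangle\langle j|$ to $|j\rangle\langle i|\otimes|i\rangle\langle j|$, that is, $|ii\rangle\langle jj|\mapsto|ji\rangle\langle ij|$. Hence
\begin{equation}
C^{T_{A}}=\frac1d\Bigl(\sum_{i}|ii\rangle\langle ii|+\lambda\sum_{i\neq j}|ji\rangle\langle ij|\Bigr).
\end{equation}
This is block diagonal. On each $|ii\rangle$ it acts as $1/d$. On each pair block $\{|ij\rangle,|ji\rangle\}$ with $i<j$ it acts as $(\lambda/d)$ times the swap. Its eigenvalues there are $+\lambda/d$ on $(|ij\rangle+|ji\rangle)/\sqrt2$ and $-\lambda/d$ on $(|ij\rangle-|ji\rangle)/\sqrt2$.

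So the spectrum is $\{1/d,\ \lambda/d,\ -\lambda/d\}$, and the minimum is $-\lambda/d=-e^{-\gamma}/d$. This is attained on the whole antisymmetric subspace: since $C^{T_A}=\lambda\,\mathrm{SWAP}/d$ on the off-diagonal part, every antisymmetric vector is an eigenvector. Because $\lambda>0$ for finite $\gamma$ and $d\ge2$ guarantees at least one pair $i<j$, the minimum eigenvalue is strictly negative. By the Peres--Horodecki criterion $C_{\mathcal{D}_\gamma}$ is then entangled, so $\mathcal{D}_\gamma$ is not EB.

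For the limit statement, as $\gamma\to\infty$ we have $\mathcal{D}_\gamma\to\Delta$. The Choi state of $\Delta$ is $\frac1d\sum_i|ii\rangle\langle ii|$, which is separable, so $\Delta$ is EB; equivalently, $\Delta$ is a measure-and-prepare channel.

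The main obstacle is only bookkeeping in the partial-transpose step, namely keeping track of which tensor factor is transposed so that the off-diagonal part correctly becomes a swap. Everything else is immediate from the block structure.
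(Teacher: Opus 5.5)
Your proposal is correct and follows the paper's proof essentially line for line: the same Choi matrix, the same partial transpose $\frac1d\sum_i|ii\rangle\langle ii|+\frac{\lambda}{d}\sum_{i\ne j}|ji\rangle\langle ij|$, and the same identification of eigenvalues $-\lambda/d$ on the antisymmetric vectors, $+\lambda/d$ on the symmetric off-diagonal ones, and $1/d$ on the $|ii\rangle$. Your two additions make explicit points the paper leaves implicit: that $d\ge2$ is what guarantees a nonempty antisymmetric sector, and that the limit channel $\Delta$ is EB because its Choi state $\frac1d\sum_i|ii\rangle\langle ii|$ is separable.
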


\begin{proof}
Applying $\id\otimes\mathcal{D}_{\gamma}$ to $|\Phi^{+}\rangle\langle\Phi^{+}|$
multiplies the $|ii\rangle\langle jj|$ entries with $i\neq j$ by $\lambda$ and
leaves the diagonal untouched, so
\begin{equation}
C=\lambda\,|\Phi^{+}\rangle\langle\Phi^{+}|
 +\frac{1-\lambda}{d}\sum_{i}|ii\rangle\langle ii| .
\label{eq:choi-deph}
\end{equation}
Partial transposition on the first factor sends
$|ii\rangle\langle jj|\mapsto|ji\rangle\langle ij|$ and fixes
$|ii\rangle\langle ii|$, so
\begin{equation}
C^{T_{A}}=\frac{\lambda}{d}\sum_{i\ne j}|ji\rangle\langle ij|
 +\frac{1}{d}\sum_{i}|ii\rangle\langle ii| .
\end{equation}
The first term is $(\lambda/d)$ times the swap operator restricted to the
off-diagonal sector. On the antisymmetric vector
$|\psi^{-}_{ij}\rangle=(|ij\rangle-|ji\rangle)/\sqrt2$ with $i\ne j$ it acts as
multiplication by $-1$, while the second term annihilates
$|\psi^{-}_{ij}\rangle$. Hence
$C^{T_{A}}|\psi^{-}_{ij}\rangle=-(\lambda/d)|\psi^{-}_{ij}\rangle$: every one of
the $d(d-1)/2$ antisymmetric vectors is an eigenvector with eigenvalue
$-\lambda/d$. The remaining eigenvalues are $+\lambda/d$ on the symmetric
off-diagonal vectors and $1/d$ on the $|ii\rangle$, all non-negative, so
$-\lambda/d$ is the minimum. Since $\lambda=e^{-\gamma}>0$ for all finite
$\gamma$, the claim follows.
\end{proof}

Two remarks. First, Eq.~\eqref{eq:thm1} depends on $\gamma$ only through an
overall exponential factor; there is no structural change anywhere along the
family, and in particular nothing that could be read as a transition. Second,
the value is the same for every $d$ up to the $1/d$ prefactor, so no threshold
can be recovered by enlarging the system.

\section{Amplitude damping does not restore the threshold}

One might expect that adding relaxation---which does drive the system towards a
fixed pure state---creates the missing threshold. It does not.

\begin{lemma}\label{lem:tensor}
Let $\mathcal{M}$ act on one qubit and let $\mu_{\min}\le0\le\mu_{\max}$ be the
extreme eigenvalues of $C^{T_{A}}_{\mathcal{M}}$. Then
\begin{equation}
\min\eig\,C^{T_{A}}_{\mathcal{M}^{\otimes n}}
 = \mu_{\min}\,\mu_{\max}^{\,n-1}.
\end{equation}
\end{lemma}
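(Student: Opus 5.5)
The plan is to reduce the $n$-qubit partial transpose to a tensor power of the single-qubit one, and then do a short extremal argument on products of eigenvalues.

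\emph{Step 1 (factorisation).} For $\mathcal{M}^{\otimes n}$ acting on $n$ qubits, the maximally entangled input on $(\mathbb{C}^{2})^{\otimes n}\otimes(\mathbb{C}^{2})^{\otimes n}$ equals $|\Phi^{+}\rangle^{\otimes n}$ after regrouping the factors as $(A_{1}B_{1})\cdots(A_{n}B_{n})$. The normalisations agree, since $2^{-n/2}=(2^{-1/2})^{n}$. Therefore, up to the permutation unitary $P$ that reorders the subsystems,
\begin{equation}
C_{\mathcal{M}^{\otimes n}}=P\,C_{\mathcal{M}}^{\otimes n}\,P^{\dagger}.
\end{equation}
Partial transposition on $A=A_{1}\cdots A_{n}$ is the tensor product of the transpositions on the individual $A_{k}$, and it commutes with this reordering. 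Hence $C^{T_{A}}_{\mathcal{M}^{\otimes n}}$ is unitarily equivalent to $X^{\otimes n}$, where $X=C^{T_{A}}_{\mathcal{M}}$. Its spectrum is therefore the multiset of products $x_{1}x_{2}\cdots x_{n}$, with each $x_{k}$ drawn from $\mathrm{spec}(X)$.

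\emph{Step 2 (structure of the single-qubit spectrum).} Two facts about $X$ are needed.

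(a) $X$ has at most one negative eigenvalue. I would invoke the standard fact that the partial transpose of any two-qubit state has at most one negative eigenvalue. One way to see it is that the negative eigenspace of $\rho^{T_{A}}$ cannot contain a product vector, and every two-dimensional subspace of $\mathbb{C}^{2}\otimes\mathbb{C}^{2}$ contains one.

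(b) $|\mu_{\min}|\le\mu_{\max}$. Because $\operatorname{tr}X=1$, the three remaining eigenvalues sum to $1+|\mu_{\min}|$, so $\mu_{\max}\ge(1+|\mu_{\min}|)/3$. This is at least $|\mu_{\min}|$ whenever $|\mu_{\min}|\le1/2$. That bound is the known range $[-1/2,1]$ for partial-transpose eigenvalues of a state, which follows from $\langle\psi|\rho^{T_{A}}|\psi\rangle$ being bounded below by $-\tfrac12$ times the largest Schmidt-coefficient product. Bound (b) is the step I expect to need the most care. It is exactly what rules out a product such as $\mu_{\min}^{3}\mu_{\max}^{n-3}$ beating $\mu_{\min}\mu_{\max}^{n-1}$.

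\emph{Step 3 (extremisation).} If $\mu_{\min}=0$, every product is non-negative, so the minimum is $0=\mu_{\min}\mu_{\max}^{n-1}$. Otherwise, by (a), a product is negative only if it uses $\mu_{\min}$ an odd number $k$ of times, with all other factors non-negative and at most $\mu_{\max}$. Such a product is bounded below by $-|\mu_{\min}|^{k}\mu_{\max}^{n-k}$. By (b), $|\mu_{\min}|^{k-1}\le\mu_{\max}^{k-1}$, so this is at least $-|\mu_{\min}|\mu_{\max}^{n-1}$. The bound is attained by choosing $x_{1}=\mu_{\min}$ and $x_{2}=\dots=x_{n}=\mu_{\max}$. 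This gives the claimed minimum $\mu_{\min}\mu_{\max}^{n-1}$, completing the lemma.
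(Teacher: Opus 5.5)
Your proof is correct and follows the same route as the paper. You factorise $C^{T_A}_{\mathcal{M}^{\otimes n}}$ as $(C^{T_A}_{\mathcal{M}})^{\otimes n}$ up to reordering and then extremise over products of single-qubit eigenvalues, and you make explicit the two facts the paper's final sentence uses without justification: a single negative eigenvalue, and $|\mu_{\min}|\le\mu_{\max}$. One small slip in (b): the bound should read $\langle\psi|\rho^{T_A}|\psi\rangle\ge-\max_{i\ne j}s_is_j\ge-\tfrac12$, not $-\tfrac12$ times that product (the singlet attains $-\tfrac12$); the range $[-\tfrac12,1]$ you actually rely on is still correct.
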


\begin{proof}
Choi states are multiplicative,
$C_{\mathcal{M}^{\otimes n}}=C_{\mathcal{M}}^{\otimes n}$ up to the local
reordering that groups reference and system registers, and partial transposition
acts factorwise, so
$(C^{\otimes n})^{T_{A}}=(C^{T_{A}})^{\otimes n}$. The spectrum of a tensor
power is the set of products of $n$ eigenvalues, and the algebraically smallest
such product pairs the single negative extreme with $n-1$ copies of the largest.
\end{proof}

\begin{theorem}\label{thm:damping}
For the channel of Eqs.~\eqref{eq:qubitmap}--\eqref{eq:pc} on $n$ qubits,
\begin{equation}
\min\eig\,C^{T_{A}}
 = \frac{-c^{2}}{p+\sqrt{p^{2}+4c^{2}}}\left(\frac{1}{2}\right)^{n-1}
 \;\simeq\;-\,\frac{e^{-(4+\kappa)\gamma t}}{2^{n}\,p},
\label{eq:thm2}
\end{equation}
which is strictly negative for every finite $\gamma$. This channel too is
non-PPT, hence not entanglement-breaking, at every finite rate.
\end{theorem}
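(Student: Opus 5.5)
The plan is to compute the single-qubit Choi matrix of the map in Eq.~\eqref{eq:qubitmap} explicitly, find its partial-transpose spectrum in closed form, and then lift the result to $n$ qubits with Lemma~\ref{lem:tensor}.

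\emph{Single-qubit Choi matrix.} Trace preservation fixes the rest of the map: $\rho_{00}\mapsto\rho_{00}+p\rho_{11}$ and $\rho_{11}\mapsto(1-p)\rho_{11}$. Applying $\id\otimes\mathcal{M}$ to $|\Phi^{+}\rangle\langle\Phi^{+}|$ with $d=2$ gives the following entries:
\begin{itemize}
\item diagonal: $1/2$ on $|00\rangle$, $0$ on $|01\rangle$, $p/2$ on $|10\rangle$, and $(1-p)/2$ on $|11\rangle$;
\item off-diagonal: a single coherence $c/2$ between $|00\rangle$ and $|11\rangle$.
\end{itemize}
As in the proof of Theorem~\ref{thm:dephasing}, partial transposition on the reference moves this coherence to the $|01\rangle\langle10|$ and $|10\rangle\langle01|$ entries. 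So $C^{T_A}_{\mathcal{M}}$ is diagonal on $|00\rangle,|11\rangle$ with eigenvalues $1/2$ and $(1-p)/2$. On $\mathrm{span}\{|01\rangle,|10\rangle\}$ it is the block
\begin{equation}
\begin{pmatrix}0 & c/2\\ c/2 & p/2\end{pmatrix},
\end{equation}
with eigenvalues $(p\pm\sqrt{p^2+4c^2})/4$. Rationalising the minus root gives
\[
\mu_{\min}=\frac{-c^{2}}{p+\sqrt{p^{2}+4c^{2}}}.
\]
This is strictly negative whenever $c\neq0$, and $c>0$ for every finite $\gamma$ by Eq.~\eqref{eq:pc}.

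\emph{Largest eigenvalue.} This is the step that needs some care, because Lemma~\ref{lem:tensor} requires $\mu_{\max}$, and the claimed factor $(1/2)^{n-1}$ means $\mu_{\max}=1/2$. The candidates are $1/2$, $(1-p)/2$ and $(p+\sqrt{p^2+4c^2})/4$; the second is clearly at most $1/2$. For the third, $(p+\sqrt{p^2+4c^2})/4\le1/2$ is equivalent to $\sqrt{p^2+4c^2}\le 2-p$, and squaring (both sides are nonnegative since $p\le1$) gives $c^2\le1-p$. This holds because $c^2=e^{-4\gamma t}(1-p)$. So $\mu_{\max}=1/2$. Lemma~\ref{lem:tensor} applies since $\mu_{\min}\le0\le\mu_{\max}$, and it yields the exact expression in Eq.~\eqref{eq:thm2}.

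\emph{Asymptotic form and conclusion.} For large $\gamma t$, $c^2=e^{-(4+\kappa)\gamma t}$ is exponentially small compared with $p=1-e^{-\kappa\gamma t}\to1$. Hence $\sqrt{p^2+4c^2}=p+O(c^2/p)$, and the prefactor becomes $-c^2/(2p)$. Multiplying by $2^{1-n}$ gives $-e^{-(4+\kappa)\gamma t}/(2^n p)$. Finally, a strictly negative eigenvalue of the partial transpose certifies, by the Peres--Horodecki criterion, that the Choi state is entangled. Hence the channel is non-PPT and not entanglement-breaking at every finite $\gamma$.
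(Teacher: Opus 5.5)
Your proof is correct and follows the paper's route: the same single-qubit Choi matrix, the same block decomposition of $C^{T_A}$ into $\mathrm{diag}(1,1-p)/2$ plus the odd $2\times2$ block, the rationalised root, and Lemma~\ref{lem:tensor} with $\mu_{\max}=1/2$. Your explicit check that $\mu_{\max}=1/2$ (equivalent to $c^{2}\le 1-p$, which holds since $c^{2}=e^{-4\gamma t}(1-p)$) and your derivation of the asymptotic form supply steps the paper only asserts; the one point you leave implicit, which the paper states, is that the $n$-qubit channel equals $\mathcal{M}^{\otimes n}$ because the generator acts identically and independently on each qubit.
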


\begin{proof}
For one qubit, Eq.~\eqref{eq:qubitmap} gives
$\mathcal{M}(|0\rangle\langle0|)=|0\rangle\langle0|$,
$\mathcal{M}(|1\rangle\langle1|)=p|0\rangle\langle0|+(1-p)|1\rangle\langle1|$
and $\mathcal{M}(|0\rangle\langle1|)=c\,|0\rangle\langle1|$, so
\begin{align}
2\,C=\;&|00\rangle\langle00|
 +c\,\bigl(|00\rangle\langle11|+|11\rangle\langle00|\bigr)\nonumber\\
 &+p\,|10\rangle\langle10|+(1-p)\,|11\rangle\langle11| .
\end{align}
Partial transposition moves the coherence into the odd sector,
$|00\rangle\langle11|\mapsto|10\rangle\langle01|$, leaving $C^{T_{A}}$ block
diagonal with an even block $\mathrm{diag}(1,1-p)/2$ and an odd block
\begin{equation}
\frac{1}{2}\begin{pmatrix}0 & c\\ c & p\end{pmatrix}
\quad\text{on}\quad \{|01\rangle,|10\rangle\},
\end{equation}
whose eigenvalues are $\bigl(p\pm\sqrt{p^{2}+4c^{2}}\bigr)/4$. Thus
$\mu_{\min}=\bigl(p-\sqrt{p^{2}+4c^{2}}\bigr)/4
=-c^{2}/\bigl(p+\sqrt{p^{2}+4c^{2}}\bigr)$, the second form obtained by
rationalising, and $\mu_{\max}=1/2$. Since the generator acts identically and
independently on each qubit with no coupling Hamiltonian, the $n$-qubit channel
is $\mathcal{M}^{\otimes n}$ and Lemma~\ref{lem:tensor} gives
Eq.~\eqref{eq:thm2}. Finally $c^{2}=e^{-(4+\kappa)\gamma t}>0$ for all finite
$\gamma$, so $\mu_{\min}<0$.
\end{proof}

The asymptotic form in Eq.~\eqref{eq:thm2} is the crux: the obstruction to
entanglement breaking decays like $e^{-(4+\kappa)\gamma t}$, that is,
exponentially in the decoherence rate, but it never vanishes. This is the
behaviour that a numerical threshold search misreads, and we return to it in
Sec.~\ref{sec:artifact}.

Equation~\eqref{eq:thm2} should be evaluated in the rationalised form. The
difference form $\bigl(p-\sqrt{p^{2}+4c^{2}}\bigr)/4$ suffers catastrophic
cancellation and underflows to $-0.0$ in double precision beyond
$\gamma\approx12$, since $p^{2}+4c^{2}$ then rounds to $p^{2}$---which would
itself be read as a threshold.

\section{What does have a threshold}

\begin{figure*}[t]
\centering
\includegraphics[width=0.92\textwidth]{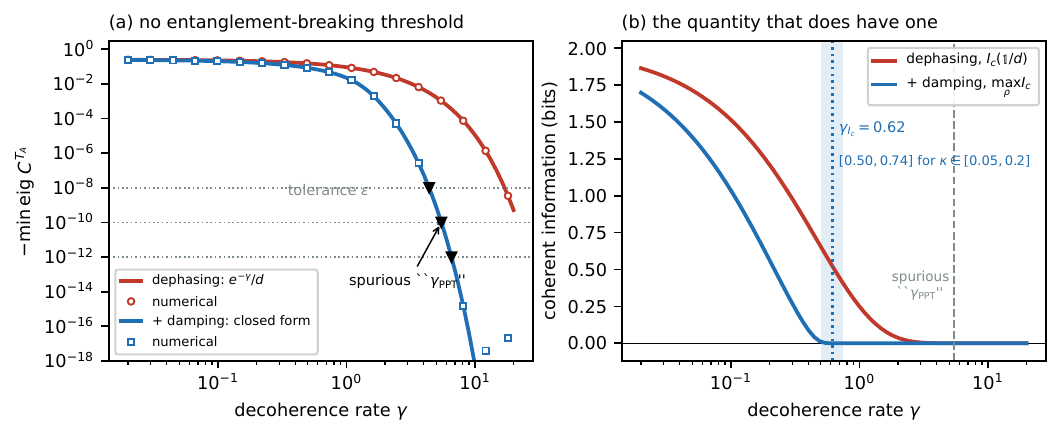}
\caption{\label{fig:main}%
(a) Minimum eigenvalue of the partial transpose of the Choi state, plotted as
$-\min\eig\,C^{T_{A}}$ so that positive values mean non-PPT, hence \emph{not}
entanglement-breaking. Lines are the closed forms of
Eqs.~\eqref{eq:thm1} and~\eqref{eq:thm2}; symbols are computed from the Choi
matrix. Both curves decay exponentially and neither reaches zero: there is no
finite entanglement-breaking threshold. Dotted lines mark absolute tolerances
$\epsilon=10^{-8},10^{-10},10^{-12}$; the triangles mark where a PPT test using
each would declare a threshold ($\gamma=4.41$, $5.49$, $6.58$). The two
right-most numerical symbols lie above the closed form because at
$|\min\eig|<10^{-16}$ the eigenvalue is dominated by solver noise.
(b) Coherent information of the same two channels. This quantity does have a
finite threshold, $\gamma_{\Ic}=0.62$ for $\kappa=0.1$ (shaded band:
$[0.50,0.74]$ over $\kappa\in[0.05,0.2]$), an order of magnitude below the
tolerance-generated value.}
\end{figure*}

We evaluated Eq.~\eqref{eq:ic} on genuine purifications and maximised over
inputs. Both channels are covariant under $e^{i\theta\sigma_{z}}$ on each qubit,
so the maximiser may be taken diagonal: twirling over the covariance group
projects onto the diagonal and cannot decrease the concave functional. We
verified this restriction against an unconstrained search over all $4\times4$
density matrices; the two agree to $10^{-9}$ except within the immediate
neighbourhood of the threshold, where both are below $10^{-4}$ bits.

Figure~\ref{fig:main}(b) shows the result. The single-letter coherent
information of channel (B) vanishes at
\begin{align}
\gamma_{\Ic}&=0.62 \qquad (\kappa=0.1),\nonumber\\
\gamma_{\Ic}&\in[0.50,0.74] \quad \text{for }\ \kappa\in[0.05,0.2].
\end{align}
This is a real threshold, and it is what a numerical study of these channels can
legitimately report. Two cautions attach to it. First, $\max_{\rho}\Ic=0$ does
not imply $Q=0$~\cite{DiVincenzo1998,SmithYard2008,Cubitt2015}, so it is a
statement about one-shot coherent information and not about the channel's
capacity. Second, it is nowhere near the entanglement-breaking property: at
$\gamma=\gamma_{\Ic}$ the Choi state has
$\min\eig\,C^{T_{A}}=-6.3\times10^{-2}$, robustly entangled by any measure.

\section{How a fixed tolerance manufactures a threshold}
\label{sec:artifact}

Implementations test PPT by asking whether $\min\eig\,C^{T_{A}}\ge-\epsilon$ for
some small $\epsilon$ meant to absorb floating-point error. That is sound when
the eigenvalue crosses zero transversally. Here it does not cross at all: by
Eq.~\eqref{eq:thm2} it approaches zero exponentially from below. A bisection on
the predicate therefore converges not to a threshold but to the solution of
$\min\eig\,C^{T_{A}}=-\epsilon$, which is
\begin{equation}
\gamma_{\text{spurious}}
 \simeq\frac{1}{4+\kappa}\,
   \ln\!\frac{1}{2^{n}\,p\,\epsilon}\,,
\end{equation}
a quantity determined by the tolerance. For channel (B) with $\kappa=0.1$,
$n=2$:
\begin{center}
\begin{tabular}{lcccc}
\hline\hline
$\epsilon$ & $10^{-8}$ & $10^{-10}$ & $10^{-12}$ & $10^{-14}$\\
$\gamma_{\text{spurious}}$ & $4.41$ & $5.49$ & $6.58$ & $7.68$\\
\hline\hline
\end{tabular}
\end{center}
The reported ``threshold'' moves by roughly $1.1$ per decade of tolerance. It is
a property of the numerical settings, not of the channel. The value
$\gamma=5.49$ obtained with the common default $\epsilon=10^{-10}$ is precisely
the number we previously reported as an entanglement-breaking threshold.

The diagnostic is simple and we recommend it: vary $\epsilon$ over several
decades. A genuine threshold is stable; an artifact slides logarithmically.

A second, independent numerical hazard affects this same channel family. Built
by exponentiating the Liouvillian, $\mathcal{N}=e^{\mathcal{L}t}$ with
$\mathcal{L}$ the two-qubit Lindbladian, the scaling-and-squaring
routine~\cite{HighamExpm2005,MolerVanLoan2003} in a standard scientific library
overflows during its squaring step for $\gamma\gtrsim3$ and returns a matrix
that is not trace-preserving, corrupting every derived quantity without raising
an error that a caller is likely to check. Because the generator has no
Hamiltonian part, the exact Kraus operators are available in closed form and
should be used instead; they agree with the matrix exponential to $10^{-16}$
throughout the range where the latter remains valid.

\section{Discussion}

The results above are elementary---Theorem~\ref{thm:dephasing} is three lines
once the Choi state is written down---and that is the point. The claim they
refute has nevertheless appeared repeatedly, including in our own work, because
the check that would have caught it was performed numerically with a tolerance
rather than analytically.

For quantum biology the practical consequence is a redirection rather than a
prohibition. Asking whether a biological channel is entanglement-breaking is
asking a question whose answer, for uniform dephasing and for dephasing with
relaxation, is fixed and negative independently of the biology; it cannot
discriminate between organisms, temperatures, or coherence times, because it is
not a threshold phenomenon in these families. Papers reporting a finite
$\gamma_{c}$ for such a channel should re-derive it from Choi separability
directly, and should expect to find either that the quantity computed was
actually the coherent-information threshold, or that it was set by a numerical
tolerance. The quantities that do discriminate---the coherent information, and
below it the operational figures of merit of whatever task the system is
supposed to perform---have thresholds an order of magnitude lower, in a regime
where the biological parameters matter.

We note explicitly that this paper is also a correction. Three preprints by the
present authors asserted an entanglement-breaking threshold at a biological
decoherence rate and built error-correction and reservoir-computing arguments on
top of it. They have been retracted~\cite{P1retraction,P2retraction,
P3retraction}. Theorem~\ref{thm:dephasing} is the reason the premise fails, and
Sec.~\ref{sec:artifact} is the reason we did not notice.

\section*{Code availability}

The closed forms, the Kraus construction, the Choi and partial-transpose
routines, and the coherent-information optimiser are in the
\texttt{channel\_analysis} module of the open \texttt{qbrain3layer} package,
together with the script that produces Fig.~\ref{fig:main}. The module replaces
an earlier routine of ours that returned a state-independent surrogate in place
of Eq.~\eqref{eq:ic} and so could not have located any threshold; that failure
mode is documented in the module for the benefit of anyone who used it.

\begin{acknowledgments}
The re-examination that produced this paper began as an adversarial internal
review of the manuscripts now retracted. We thank the editors and referees who
declined earlier versions of those manuscripts for the objections that prompted
it.
\end{acknowledgments}

\bibliographystyle{apsrev4-2}
\bibliography{refs}

\end{document}